\documentclass{elsarticle}

\usepackage{lineno,hyperref}
\usepackage{amsmath,amssymb,amsthm}
\usepackage{color}
\usepackage{algpseudocode}
\usepackage{setspace}
\usepackage{graphicx}
\usepackage{relsize}
\usepackage{arydshln}
\usepackage{tikz}
\usetikzlibrary{arrows,shapes,backgrounds,calc,positioning}
\modulolinenumbers[5]

\journal{Applied Mathematics and Computation}

\bibliographystyle{plain}

\definecolor{darkgreen}{rgb}{0.0,0.7,0.0}

\definecolor{darkred}{rgb}{0.75,0.0,0.0}

\newcommand{\R}{\mathbb{R}}

\theoremstyle{plain}
\newtheorem{thm}{Theorem}[section]

\theoremstyle{definition}
\newtheorem{defn}{Definition}[section]

\newtheorem{rem}{Remark}[section]
%


\begin{document}

\begin{frontmatter}

\title{Numerical optimal control for HIV prevention with dynamic budget allocation\tnoteref{mytitlenote}}
\tnotetext[mytitlenote]{The work of O.~S. Serea was partially supported by The French National Research Agency, Project ANR-10-BLAN 0112}

\author[Addr1]{Dmitry Gromov\corref{corrauth}}
\cortext[corrauth]{Corresponding author}
\ead{dv.gromov@gmail.com}

\author[Addr2]{Ingo Bulla}
\ead{ingobulla@gmail.com}

\author[Addr3]{Ethan O. Romero-Severson}
\ead{eoromero@lanl.gov}

\author[Addr4]{Oana Silvia Serea}
\ead{oana-silvia.serea@univ-perp.fr}

\address[Addr1]{Faculty of Applied Mathematics and Control Processes, Saint Petersburg State University, St. Petersburg, Russia}
\address[Addr2]{Institut f\"ur Mathematik und Informatik, Walther-Rathenau-Stra{\ss}e 47, 17487 Greifswald, Germany}
\address[Addr3]{Theoretical Biology and Biophysics Group, Los Alamos National Laboratory, Los Alamos, New Mexico, USA}
\address[Addr4]{ Univ. Perpignan Via Domitia, Laboratoire de
Math\'{e}matique et Physique, EA 4217, F-66860 Perpignan, France}

\begin{abstract}
This paper is about numerical control of HIV propagation. The contribution of the paper is threefold: first, a novel model of HIV propagation is proposed; second, the methods from numerical optimal control are successfully applied to the developed model to compute optimal control profiles; finally, the computed results are applied to the real problem yielding important and practically relevant results.
\end{abstract}

\begin{keyword}
HIV \sep AIDS \sep Highly active antiretroviral therapy \sep Treatment as prevention \sep Pre-exposure prophylaxis \sep Optimal control \sep Resource allocation
\end{keyword}

\end{frontmatter}

\linenumbers

\section{Introduction}

Since the outbreak of the global HIV/AIDS pandemic in the early 1980s about 35 million people died from AIDS-related illnesses \cite{WHO_HIV}. 
Although the annual number of new cases of HIV (incidence) has been decreasing globally, among men who have sex with men (MSM) in developed countries incidence is increasing \cite{Beyrer:16, Hall:08}.
The current public health challenge in these populations is effective triage of limited prevention resources. 
Two recent advancements in HIV prevention science are relevant to this issue. 
First, new forms of highly active antiretroviral therapy (HAART) are so effective that viral load in infected patients become undetectable and that patients on effective treatment are essentially non-infectious \cite{Donnell:10}; we will refer to this as the Treatment as Prevention (TaP) modality.
Second, new treatments targeted at uninfected persons have been shown to reduce the rate of infection to nearly zero when treatment is adhered to \cite{Baeten:12}, which is referred to as Pre-Exposure Prophylaxis (PrEP).
We have therapies that are effective at both blocking transmissions from infected persons and protecting uninfected persons from becoming infected; however, there is little consensus on how these technologies should be deployed \cite{cohen_hiv_2012, Eaton:12}.
The main obstacle is that while the individual-level efficacy of these interventions can be ascertained with controlled trials, their effectiveness as public health interventions cannot.   
This is due to the fact that populations are different from one another both in terms of the fundamental transmission dynamics of HIV, but also in the availability of prevention resources.
Identifying the optimal strategy of resource allocation must be based on a model of the underlying medical, biological, and social processes that captures the relevant features of the population.

Theoretical studies of intervention effectiveness are generally based on either state-space models represented by a system of ordinary differential equations (ODE) (often referred to as `compartmental` models in the epidemiology literature, \cite{Jacquez:93,Haddad:10}) and agent-based models where large-populations of individuals with complex behavioral patters are directly simulated.
The trade-off between these approaches exchanges verisimilitude in the agent-based formation for efficient computation in the compartmental formulation.
Due to the computational challenges in calculating optimal allocation strategies, compartmental models are generally preferred \cite{Kerr:15, Shattock:16, Punyacharoensin:16,Griffiths:00}. 
In this paper, we present a method for calculating the optimal allocation strategy for a hypothetical health agency in a major US city, which receives funding to set up a long-term HIV prevention program for MSM in that city.
Our analysis is based on a compartmental model of HIV transmission that includes natural history of infection, dynamic risk behavior, and partner preference components. 
For the sake of simplicity we restrict the interventions the health agency can allocate resources to TaP and PrEP. 
This in turn allows us to address the important question to which extent the investment into PrEP for MSM should be scaled up in the future, which is an open question in HIV prevention policy.
While this type of model is in widespread use, our approach to modeling the enrollment of infecteds and high-risk susceptible into ART and PrEP, respectively, and our ability to find dynamic optimal resource allocation patterns over long time horizons is unique.

The described problem of determining the optimal allocation strategy can be formulated as an optimal control problem with a specific set of constraints. 
We developed an efficient numerical scheme to deal with this problem. 
The described approach is very general and can be applied to a wide class of optimization problems. 
To facilitate the reuse of the proposed scheme we provide a detailed description of all steps and indicate possible extensions and ramifications of the method. 
When solving optimal control problems, there are two classes of methods: {\em indirect} and {\em direct} ones. 
Indirect methods employ the first-order optimality condition to formulate a two-point boundary-value problem for the system of $2N$ Hamiltonian equation, \cite{Pontr:62}, which is solved, either analytically or numerically (see, e.g., \cite{Pyt:99,SubZbi:09}). 
Analytical solutions to these problems are available only for systems of low order and will not be treated here. The use of numerical schemes is however also restricted. 
Numerical implementations of indirect methods are often numerically unstable and require a good initial guess which is in most cases not available, \citep{Asch:94}. 
Furthermore, many non-standard constraints cannot be addressed within the framework of the classical optimal control theory. This is in particular true for the problem considered in the paper. Thus one has to resort to direct methods.

Direct methods attempt to directly minimize the cost function using  constrained nonlinear programming. Within the class of direct methods, there are different possible approaches, most notable are {\em shooting} methods and {\em simultaneous} methods (see \cite{Rao:09s,Betts:10}).
Shooting methods parametrize the controls and use numerical simulation to obtain the solution of the system's equations on the whole interval (single shooting) or on a number of subintervals (multiple shooting), \cite{StoBul:93}. Nonlinear programming is then used to optimize the control parameters while obeying the constraints. The multiple shooting method has proven to be very efficient for solving many practical problems. However, the necessity of numerical integration of system's equations makes this method very time consuming.
Simultaneous methods parametrize both the controls and the system's trajectory and determine the missing values of parameters by solving a (typically) large system of nonlinear algebraic equations. 
All unknown parameters are determined {\em simultaneously} thus giving the name to this class of methods. Along with system's equations, one describes the constraints using the introduced parametrization. This is referred to as the {\em transcription} procedure. The optimal control problem is thus formulated as a large scale nonlinear optimization problem, \cite{Garg:10}. 

A modification of the latter method was employed in the paper. It was used it to compute a set of optimal allocation strategies for a particular scenario of HIV propagation for different parameters of the treatment. 

The paper is organized as follows: in Section \ref{sec:model}, a model of HIV propagation is derived and analyzed, Section \ref{sec:OC-problem} describes the optimal control problem of resource allocation for HIV treatment and prevention while Section \ref{sec:OC-num} describes the numerical scheme used to solve the optimal control problem. Finally, Section \ref{sec:results} presents and analyzes the numerical results.

\section{Epidemiological model}\label{sec:model}

\subsection{Derivation of the model}

We base our approach on a population balance model. This means that we divide the whole population of MSM in the respective city into a number of groups. All individuals within a given group are assumed to be identical in their evolution. The state variables of the model correspond to the number of people within each group. These are described in Table \ref{tab:notation}, together with other important notation used throughout the paper. The dynamics of each state variable can be described by the following differential equation:
$$\dot{Z}=V_{in} - V_{out},$$ 
where $V_{in}$ and $V_{out}$ describe the in- and out-flows. That is the number of people that enter or leave the respective group within the unit time interval. The disaggregation is based on whether an individual is infected and in which stage, what is his risk behavior, and whether he receives treatment and which one. In our model, the time unit is set equal to [month]. 

\begin{table}[htb]
\begin{tabular}{ll}
\hline
Size variables:\\
$n$&Dimension of the state vector\\
$n_{int}$& Number of solution segments\\
$n_{cp}$& Number of collocation points\\
$m$& Dimension of the control vector\\[2pt]
\hline
Points:\\
$t_i, \; i=0,\dots,n_{int}$&Knot points\\
$\tau^i_k,\; k=0,\dots,n_{cp}$& Grid points\\[2pt]
\hline
State vectors:\\
$X(t)$&State evolution of the system,\\
$\hat{X}(t)$&Polynomial interpolation of $X(t)$ through the grid points $\tau^i_k$,\\
$\tilde{X}^i(t)$& The solution of the uncontrolled system on the $i$th interval,\\
$\mathbf{X}^i$&Matrix of the state values at grid points $\tau^i_k$\\[2pt]
\hline
State variables:&\\
$S_{\cdot}(t)$& untreated susceptible individuals\\
$I_{\cdot\, \cdot}(t)$& untreated infected individuals\\
$T_{\cdot}(t)$& infected individuals on TaP\\
$P(t)$& susceptible individuals on PrEP\\
$D(t)$ & individuals deceased due to AIDS\\
$N(t)$ & all individuals\\
$N_\cdot(t)$ & all individuals displaying a given risk behavior\\
\hline
Indices:&\\
$A/C$&Infectious stage (acute/chronic)\\
$H/L$&Risk status (high/low)\\
\hline
\end{tabular}
\caption{Notation used in the paper. The state variables have up to two indices. For variables with one index it indicates the risk status, for variables with two indices infectious stage and risk status are indicated.}
\label{tab:notation}
\end{table}

The structure of the flows within the system is shown in Fig.~\ref{fig:dynamics}, the model parameters are summarized in Table \ref{tab:param}. Transitions between states happen due to individuals becoming infected, progressing from acute to chronic stage, and dying of AIDS ($S_\cdot \rightarrow I_{A\cdot} \rightarrow I_{C\cdot} \rightarrow D$). Moreover, individuals change their risk behavior ($(\cdot)_L \leftrightarrow (\cdot)_H$, $P \rightarrow S_L$), are put on PrEP or TaP treatment or cancel it ($S_H \leftrightarrow P$, $I_{C\cdot} \leftrightarrow T_\cdot$). Finally there is flow into the system due to individuals reaching an age of sexual activity and outflow from the system because of non-HIV related death or individuals becoming sexual inactive or settling in a monogamous, lifelong relationship.

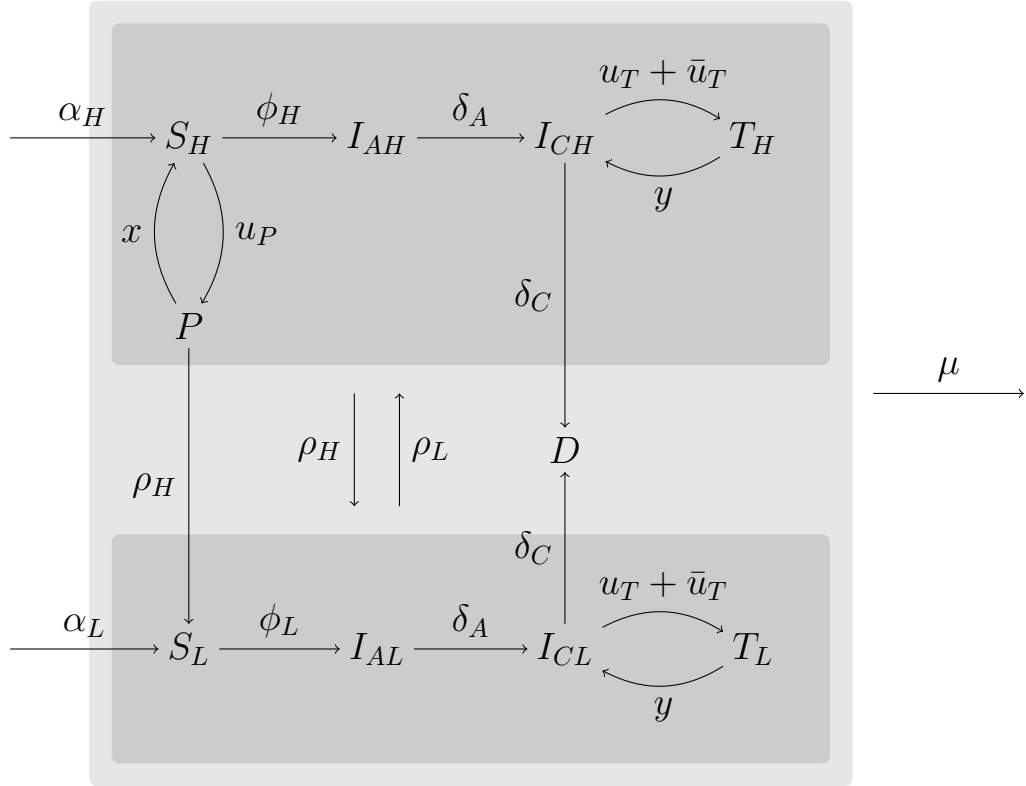
\begin{figure}[htb]
\def\shownum{0} 
\center
\begin{tikzpicture}[node distance=2.5cm, font=\relsize{2}]
\begin{scope}
  \node (X_H) {};
	\node (S_H) [right of= X_H] {$S_H$};
	\node (I_AH) [right of= S_H] {$I_{AH}$};
	\node (I_CH) [right of= I_AH] {$I_{CH}$};
	\node (P) [below of= S_H] {$P$};
	\node (T_H) [right of= I_CH] {$T_H$};
	
	\node (top_left_H) at ($(S_H) + (-8mm,13mm)$) {};
	\node (bottom_right_H) at ($(T_H |- P) + (8mm,-3mm)$) {};
	
	\draw [->] (X_H) to node[above=1pt] {$\alpha_H$} (S_H);
	\draw [->] (S_H) to node[above=1pt] {$\phi_H$} (I_AH);
	\draw [->] (S_H) to [bend left=30] node[right=1pt] {$u_P$} (P);
	\draw [->] (I_AH) to node[above=1pt] {$\delta_A$} node[below=1pt] {\ifthenelse{\shownum=1}{0.17}{}} (I_CH);
	\draw [->] (I_CH) to [bend left=30] node[above=1pt] {$u_T + \bar{u}_T$} node[below=1pt] {\ifthenelse{\shownum=1}{$\sim 0.01$}{}} (T_H);
	\draw [->] (P) to [bend left=30] node[left=1pt] {$x$} (S_H);
	\draw [->] (T_H) to [bend left=30] node[below=1pt] {$y$} (I_CH);
\end{scope}

\begin{scope}[yshift=-6.8cm]
    \node (X_L) {};
	\node (S_L) [right of= X_L] {$S_L$};
	\node (I_AL) [right of= S_L] {$I_{AL}$};
	\node (I_CL) [right of= I_AL] {$I_{CL}$};
	\node (T_L) [right of= I_CL] {$T_L$};
	
	\node (top_left_L) at ($(S_L) + (-8mm,13mm)$) {};
	\node (bottom_right_L) at ($(T_L) + (8mm,-13mm)$) {};

	\draw [->] (X_L) to node[above=1pt] {$\alpha_L$} (S_L);
	\draw [->] (S_L) to node[above=1pt] {$\phi_L$} (I_AL);
	\draw [->] (I_AL) to node[above=1pt] {$\delta_A$} node[below=1pt] {\ifthenelse{\shownum=1}{0.17}{}} (I_CL);
	\draw [->] (I_CL) to [bend left=30] node[above=1pt] {$u_T + \bar{u}_T$} node[below=1pt] {\ifthenelse{\shownum=1}{$\sim 0.01$}{}} (T_L);
	\draw [->] (T_L) to [bend left=30] node[below=1pt] {$y$} (I_CL);
\end{scope}

\node (top_left) at ($(top_left_H) + (-3mm,3mm)$) {};
\node (bottom_right) at ($(bottom_right_L) + (3mm,-3mm)$) {};

\node (dummy) at ($0.5*(bottom_right_H) + 0.5*(top_left_L)$) {};
\node (D) at ($(I_CH |- dummy)$) {$D$};

\draw [->] (I_CH) to node[left=1pt] {$\delta_C$} node[right=1pt] {\ifthenelse{\shownum=1}{$8.5\cdot10^{-3}$}{}} (D);
\draw [->] (I_CL) to node[left=1pt] {$\delta_C$} node[right=1pt] {\ifthenelse{\shownum=1}{$8.5\cdot10^{-3}$}{}} (D);
\draw [->] (P) to node[left=1pt] {$\rho_H$} (S_L);

\draw [->] ($(I_AH |- bottom_right_H) + (-3mm,-6mm)$) to node[left=1pt] {$\rho_H$} ($(I_AH |- top_left_L) + (-3mm,6mm)$);
\draw [->] ($(I_AH |- top_left_L) + (3mm,6mm)$) to node[right=1pt] {$\rho_L$} ($(I_AH |- bottom_right_H) + (3mm,-6mm)$);

\draw [->] ($0.5*(T_H) + 0.5*(T_L) + (16mm,0)$) to node[above=1pt] {$\mu$} node[below=1pt] {\ifthenelse{\shownum=1}{$2.8\cdot10^{-3}$}{}} ($0.5*(T_H) + 0.5*(T_L) + (36mm,0)$);

\begin{pgfonlayer}{background}
  \filldraw [line width=2mm,join=round,black!10]
    (top_left.north -| bottom_right.east) rectangle (bottom_right.south -| top_left.west);
  \filldraw [line width=2mm,join=round,black!20]
    (top_left_H.north -| bottom_right_H.east) rectangle (bottom_right_H.south -| top_left_H.west)
    (top_left_L.north -| bottom_right_L.east) rectangle (bottom_right_L.south -| top_left_L.west);
\end{pgfonlayer}

\end{tikzpicture}
\caption{The dynamics of the system. In the upper darkgrey box all high-risk states are located, in the lower darkgrey box all low-risk states. All annotations on the arrows are relative transition rates except for $u_P$ and $u_T$ which are non-rate quantities that govern the rates of the respective transition. There are transitions between high- and low-risk in both directions for susceptibles, infecteds, and people treated with TaP. But people treated with PrEP only can adapt a low risk behavior and become low-risk susceptibles, while there are no transitions from low risk individuals into the group on PrEP treatment. The outflow $\mu$ applies to all groups equally.} 
\label{fig:dynamics}
\end{figure}

\begin{table}[htb]
\begin{tabular}{ll}
\hline
Infection parameters:\\
$\phi_H$, $\phi_L$ & transmission rate for high-risk resp.~low-risk susceptibles \\
$\delta_A$ & rate that acutely infected individuals become chronically infected \\
$\delta_C$ & rate that chronically infected individuals die due to AIDS \\
$\lambda_H$, $\lambda_L$ & contact rate of high-risk resp.~low risk individuals \\
\hdashline
$\beta_A$, $\beta_C$ & infection probability per act for acutely resp.~chronically infecteds \\
$\pi$ & probability that a sexual contact takes place at the respective preferred sites \\
\hline
Treatment parameters:\\
x & rate at which PrEP fails or is canceled \\
$\bar{u}_T$ & baseline enrollment rate into TaP \\
y & rate at which TaP fails or is canceled \\
\hline
Other parameters:\\
$\alpha_H$, $\alpha_L$ & recruitment rate of new high-risk resp.~low-risk susceptibles \\
$\mu$ & rate that adults die of non-HIV related causes, reach an age of \\
& sexual inactivity, or settle in a monogamous, lifelong relationship \\
$\rho_H$ & rate that high-risk persons become low-risk \\
$\rho_L$ & rate that low-risk persons become high-risk \\
\hline
\end{tabular}
\caption{Model parameters. The variables are grouped into the ones directly governing the infection process and the treatment process, respectively, and other variables. The parameters $\beta_A$, $\beta_C$, and $\pi$ are all probabilities, all other parameters are rates, measured in individuals per month.}
\label{tab:param}
\end{table}

Parameters used in the model were selected from studies of general MSM populations from the United States. Behavioral parameters including the proportion of the population that is high-risk, the contact rate ratio of high and low-risk individuals, and the rate at which high-risk individuals become low risk and vice versa were taken from an analysis of longitudinal sexual contact rate data of unaffected gay men in 3 large cities in the United States \cite{romero-severson_dynamic_2015}. 

Finally, we selected values of $\lambda_L$ and $\bar{u}_T$ such that the endemic equilibrium prevalence is about 20\% and about 25\% of infected individuals are on effective treatment, which is consistent with an average large MSM population in the United States \cite{_prevalence_2010, rosenberg_modeling_2014}.   However, there are several important aspects of the model which deserve particular attention. These are described below.

\paragraph{Mixing}
To model infection events (which take place at rate $\phi_H$ resp.~$\phi_L$ in our model), we assume that transmissions occur at three distinct sites:
\begin{itemize}
\item locations frequented exclusively by high-risk individuals
\item locations frequented exclusively by low-risk individuals
\item locations jointly used by both low- and high-risk individuals
\end{itemize}
The total rate of contact of high- resp.~low-risk individuals is denoted by $\lambda_H$ resp.~$\lambda_L$. Hereby, both risk groups make a proportion $\pi$ of their contacts at the site which is exclusively frequented by their own risk group and the remainder of contacts at the mixing site. 

To derive $\phi_H$ and $\phi_L$, we omit the time dependency in the state variables for now. Then, the total number of contacts made at a given time is given by 
$$
\theta = \lambda_H N_H + \lambda_L N_L
$$
with 
\begin{eqnarray*}
N_H & = & S_H + I_{AH} + I_{CH} + P + T_H, \\
N_L & = & S_L + I_{AL} + I_{CL} + T_L.
\end{eqnarray*}
The probability that, given a random individual has a contact, this contact is with a high resp.~low risk individual is
\begin{equation*}
\eta_H = \frac{\lambda_H N_H}{\theta},\quad {\rm resp.}\quad
\eta_L = \frac{\lambda_L N_L}{\theta}.
\end{equation*}
Then, denoting the probability of transmission per contact by $\beta_A$ for acute-stage infecteds and $\beta_C$ for chronic-stage infecteds, the probability of transmission per contact at the mixing site is
\begin{eqnarray*}
\sigma & = & \beta_A \left(\eta_H \frac{I_{AH}}{N_H}+ 
                		   \eta_L \frac{I_{AL}}{N_L} \right) + 
             \beta_C \left(\eta_H \frac{I_{CH}}{N_H}+ 
                    	   \eta_L \frac{I_{CL}}{N_L} \right) \\
       & = & \theta^{-1} 
             \left[ \beta_A(\lambda_H I_{AH} + \lambda_L I_{AL}) + 
                    \beta_C(\lambda_H I_{CH} + \lambda_L I_{CL}) \right].
\end{eqnarray*}
Consequently, the per-capita instant rate of a high-risk susceptible becoming infected at the common site is
$$
\tau_H = (1-\pi)\lambda_H \sigma.
$$
Moreover, the per-capita instant rate of a high-risk susceptible becoming infected at the high-risk preferred site is
$$ 
\psi_H = \pi \lambda_H\left[\beta_A \frac{I_{AH}}{N_H} + \beta_C \frac{I_{CH}}{N_H}\right].
$$
Therefore, the total per-capita transmission rate for high-risk susceptibles is given by
$$
 \phi_H = \psi_H + \tau_H.
$$
Analogously, the per-capita instant rate of a low-risk susceptible becoming infected at the common site is
$$
\tau_L = (1-\pi)\lambda_L \sigma
$$
and the per-capita instant rate of a low-risk susceptible becoming infected at the low-risk preferred site is
$$
\psi_L = \pi \lambda_L\left[\beta_A \frac{I_{AL}}{N_L} + \beta_C \frac{I_{CL}}{N_L}\right].
$$ 
The total per-capita transmission rate for low-risk susceptibles is given by
$$
 \phi_L = \psi_L + \tau_L.
$$

\paragraph{Enrollment} 
For the enrollment on PrEP and TaP, we assume that both is done by randomly sampling individuals at locations where high-risk individuals resp.~chronically infecteds are prevalent. In case a sampled individual turns out to be a high-risk susceptible resp.~chronically infected, he is urged to enroll in PrEP resp.~TaP.

That is, for PrEP we assume that there is a high-risk environment (HRE) where high risk individuals are overrepresented, like bars or sex clubs. The recruitment of the patience then would be carried out during the period in which the location is strongly frequented, i.e., typically in the evening on weekends. Then, denoting the probability that a random high resp.~low risk individual is in a HRE at a random moment during the recruitment period by
\begin{eqnarray*}
p_H & = & P(\mbox{HRE} | R = H)  \\
p_L & = & P(\mbox{HRE} | R = L),
\end{eqnarray*}

The probability of a random individual encountered in a HRE at a random moment to be a high-risk susceptible is
\begin{eqnarray*}
P(S_H | \mbox{HRE}) 
& = & \frac{ P(\mbox{HRE} | R = H)P(S_H) }{ P(\mbox{HRE} | R = H)P(R = H) + P(\mbox{HRE} | R = L)P(R = L) } \\
& = & \frac{ p_H\frac{S_H}{N} }{ p_H\frac{N_H}{N} + p_L\frac{N_L}{N} } = \frac{ r_bS_H }{ r_bN_H + N_L } =: \zeta_P, \\
\end{eqnarray*}
where $r_b = p_H / p_L$. That is, $r_b$ is the odds of a high-risk person to go to a HRE compared to a low-risk person, which we assume to be $r_b=0.8/0.2$ in our work. Consequently, if $u$ is the relative rate at which individuals are sampled at HREs and then put on PrEP in case they are high risk susceptibles, the absolute rate for transition from $S_H$ to $P$ is
\label{value_rb}
\[
  uN P(S_H|\mbox{HRE}) = u\zeta_PN.
\]

Using the same modeling approach for TaP, we obtain the absolute rate for transition from $I_{CH}$ to $T_H$ beyond the baseline to be $v\zeta_{T,H}N$
with
\[
  \zeta_{T,H} = \frac{ r_b I_{CH} }{ r_b N_H + N_L }.
\]
Analogously, the absolute rate for transition from $I_{CL}$ to $T_L$ beyond the baseline is $v\zeta_{T,L}N$
with
\[
  \zeta_{T,L} = \frac{ I_{CL} }{ r_b N_H + N_L }.
\]

\paragraph{Dynamical system} 
Based on the considerations presented in the previous paragraphs, we obtain the following system of ODEs describing the dynamics of our system:

\begin{equation}\label{eq:sys}\begin{aligned}
\dot{S}_H =~& \alpha_H -(\phi_H(X) + \rho_H + \mu)S_H + \rho_L S_L +xP - u_P\zeta_P(X)N\\
\dot{S}_L =~& \alpha_L - (\phi_L(X) + \rho_L + \mu)S_L + \rho_H(S_H + P)\\
\dot{I}_{CH} =~& \delta_A I_{AH}  - (\rho_H + \mu + \delta_C + v_b)I_{CH} + \rho_L I_{CL} + yT_H - u_T\zeta_{T,H}(X)N\\
\dot{I}_{CL} =~& \delta_A I_{AL}  - (\rho_L + \mu + \delta_C + v_b)I_{CL} + \rho_H I_{CH} + yT_L - u_T\zeta_{T,L}(X)N\\
\dot{I}_{AH} =~& \phi_H S_H - (\rho_H + \mu + \delta_A)I_{AH} + \rho_L I_{AL}\\
\dot{I}_{AL} =~& \phi_L S_L - (\rho_L + \mu + \delta_A) I_{AL} + \rho_H I_{AH}\\
\dot{T}_H =~& -(y + \rho_H + \mu)T_H + v_bI_{CH} + \rho_L T_L + u_T\zeta_{T,H}(X)N \\
\dot{T}_L =~& -(y + \rho_L + \mu)T_L + v_bI_{CL} + \rho_H T_H + u_T\zeta_{T,L}(X)N\\
\dot{P} =~& -(x + \rho_H + \mu)P + u_P\zeta_P(X)N,
\end{aligned}\end{equation}
where $X=\begin{bmatrix}T_H&T_L&I_{CH}&I_{CL}&I_{AH}&I_{AL}&S_H&S_L&P \end{bmatrix}'$ is the vector of state variables; $N$ is the sum of all states, $N=\langle\mathbf{1}, X\rangle$, $\mathbf{1}$ is the column of ones; $\phi_H(X)$, $\phi_L(X)$, $\zeta_{T,H}(X)$, $\zeta_{T,L}(X)$, and $\zeta_{P}(X)$ are the non-linear (rational) functions of $X$ which take on non-negative values for any $X\in \R^n_{\ge 0}$; $u_P$ and $u_T$ are the control inputs which correspond to the fraction of total population being involved either in PrEP ($u_P$) or in TaP ($u_T$), and all the remaining terms are non-negative constants.

Below, we analyze an important property of the system (\ref{eq:sys}) that will be used later on.

\subsection{Nonnegativity}

Consider the control system 
\begin{equation}\label{eq:sys-example}
\dot{X}=f(X)+g(X,u),
\end{equation}
where $f(X):\R^n\rightarrow \R^n$, and $g(X):\R^n\times\R^m\rightarrow\R^n$.

The system (\ref{eq:sys-example}) is said to be {\em nonnegative} if any solution starting at $t_0$ from $X_0\in \R_{\ge 0}^n$ belongs to $\R_{\ge 0}^n$ for all $t\in[t_0,\infty)$, i.e.,  $\forall X_0\in \R_{\ge 0}^n$, $\forall t\in[t_0,\infty)$, $X(t_0,X_0,t)\in \R_{\ge 0}^n$ (see \cite{Haddad:05,Haddad:10} for more details).

\begin{defn}Let $f = [f_1, \dots, f_n]': \R^n \rightarrow \R^n$. Then $f$ is {\em essentially nonnegative} if $f_i(X) \ge 0$, for all $i = 1, \dots, n$, and $X \in \R^n_{\ge 0}$ such that $X_i = 0$, where $X_i$ denotes the $i$-th element of $X$.\end{defn}

We have the following result:
\begin{thm}
The system (\ref{eq:sys-example}) is nonnegative for any nonnegative control $u(t):[t_0,\infty)\rightarrow \R_{\ge 0}^m$ if $f(X)$ and $g(X,u)$ are essentially nonnegative.
\end{thm}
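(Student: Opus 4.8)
\medskip
\noindent\textbf{Proof proposal.} The plan is to reduce the statement to the classical positive-invariance of the nonnegative orthant for a (time-varying) essentially nonnegative vector field, and then to prove that invariance by a first-exit-time argument sharpened with an $\varepsilon$-perturbation. First I would absorb the control into the dynamics: set $F(t,X):=f(X)+g(X,u(t))$. Since $u(t)\in\R_{\ge 0}^m$ for all $t$ and both $f$ and $g(\,\cdot\,,u)$ are essentially nonnegative, we obtain $F_i(t,X)\ge 0$ for every index $i$, every $t\ge t_0$, and every $X\in\R^n_{\ge 0}$ with $X_i=0$. Hence it is enough to show that every solution of $\dot X=F(t,X)$ issued from $X_0\in\R^n_{\ge 0}$ stays in $\R^n_{\ge 0}$ throughout its interval of existence. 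I will tacitly use that the data are regular enough (continuous, and locally Lipschitz in $X$) for solutions to exist, be $C^1$ in $t$, and depend continuously on parameters; for (\ref{eq:sys}) this is immediate because the nonlinear coefficients $\phi_H,\phi_L,\zeta_P,\zeta_{T,H},\zeta_{T,L}$ are rational and smooth on a neighbourhood of $\R^n_{\ge 0}$ (and if $u$ is only piecewise continuous one simply patches the argument over the intervals of continuity).

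Next I would handle the \emph{strict} case. Assume temporarily the stronger hypothesis $F_i(t,X)>0$ whenever $X\in\R^n_{\ge 0}$ and $X_i=0$, and suppose, for contradiction, that some solution $X(\cdot)$ with $X(t_0)\ge 0$ leaves the orthant. Put $t_1:=\inf\{\,t>t_0:X(t)\notin\R^n_{\ge 0}\,\}$. By continuity $X(t_1)\in\R^n_{\ge 0}$; moreover there exist $t_n\downarrow t_1$ and an index $j$ (one of finitely many, hence recurring infinitely often) with $X_j(t_n)<0$, so $X_j(t_1)=0$, and the difference quotients $(X_j(t_n)-X_j(t_1))/(t_n-t_1)<0$ force $\dot X_j(t_1)\le 0$. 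But $X(t_1)\in\R^n_{\ge 0}$ with $X_j(t_1)=0$ gives $\dot X_j(t_1)=F_j(t_1,X(t_1))>0$ --- a contradiction. So under the strict hypothesis $\R^n_{\ge 0}$ is positively invariant.

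Finally I would remove the strictness by perturbation. For $\varepsilon>0$ consider $\dot Y=F(t,Y)+\varepsilon\mathbf{1}$ with $Y(t_0)=X_0$; its right-hand side satisfies $F_i(t,Y)+\varepsilon\ge\varepsilon>0$ on $\{\,Y\in\R^n_{\ge 0}:Y_i=0\,\}$, so by the previous step the solution obeys $Y^{\varepsilon}(t)\in\R^n_{\ge 0}$ on its interval of existence. Letting $\varepsilon\to 0$ and invoking continuous dependence of solutions on parameters, $Y^{\varepsilon}\to X$ uniformly on every compact subinterval of the maximal existence interval of $X$, whence $X(t)\in\R^n_{\ge 0}$ there. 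This establishes the theorem, and in particular nonnegativity of (\ref{eq:sys}).

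The main obstacle --- and the reason the argument cannot be a one-liner --- is the non-strictness built into essential nonnegativity: on a boundary face the field may be merely tangent, so the naive first-exit argument yields only $\dot X_j(t_1)\le 0$ together with $\dot X_j(t_1)=F_j(t_1,X(t_1))\ge 0$, i.e.\ $\dot X_j(t_1)=0$, which is no contradiction. The $\varepsilon$-term is precisely what upgrades ``$\ge$'' to a genuine ``$>$'', and the price is the limiting step, which is exactly where the (mild) regularity assumed of $f$ and $g$ enters, through continuous dependence of solutions on parameters. If one prefers to bypass the limit, the same conclusion follows from Nagumo's invariance theorem: $\R^n_{\ge 0}$ is closed and convex, its tangent cone at a point with $X_i=0$ for $i\in\mathcal I$ is $\{\,v:v_i\ge 0,\ i\in\mathcal I\,\}$, and the subtangentiality condition on that cone is exactly essential nonnegativity of $F$.
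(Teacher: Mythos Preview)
Your argument is correct and rests on the same geometric idea as the paper's proof: on each bounding hyperplane $H_i=\{X\in\R^n_{\ge 0}:X_i=0\}$ the essential nonnegativity condition makes the vector field point into the orthant, whence $\R^n_{\ge 0}$ is positively invariant. The paper's proof is a two-line sketch of precisely this observation, whereas you supply the rigorous details---the first-exit-time argument, the $\varepsilon$-perturbation to handle the non-strict (tangent) case, and the limiting step via continuous dependence---that the paper leaves implicit.
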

\begin{proof}
This result can be proved geometrically by examining the direction of the vector field on the bounding hyperplanes $H_i=\{X\in\R_{\ge 0}^n|X_i=0\}$. The essential nonnegativeness condition guarantees that on each $H_i$ the system's vector field points towards the positive orthant $\R_{\ge 0}^n$.
\end{proof}

In our case, we can readily observe that $\phi_H$ and $\phi_L$ as well as $\zeta_{T,L}$, $\zeta_{T,H}$, and $\zeta_{P}$ are positive for all $X\in \R^n_{\ge 0}$. Also, we have that the latter three functions turn to zero when, respectively, $I_{CH}$, $I_{CL}$ or $S_H$ is equal to zero.  Thus the right hand side of (\ref{eq:sys}) is essentially nonnegative and the system's dynamics is non-negative either.


\section{Optimal control problem}\label{sec:OC-problem}

The optimal control problem is to minimize the total cost while respecting certain structural and budgetary restrictions. The instantaneous cost is defined as $C(t,X(t),U(t))$ and the total cost to be minimized is
\begin{equation}\label{eq:problem}
J^C(X)=\int\limits_0^{t_f} C(t,X(t),U(t)) dt,
\end{equation}
where $t_f$ is the time horizon which is chosen to correspond to the duration of the intervention. In the following we set the initial time to 0, but the problem can be easily modified to account for an arbitrary initial time. Note that a possible problem statement could include a discounting factor $e^{-\rho t}$, $\rho>0$ which is typically used to describe our priorities: one may attach greater importance to decreasing the incidence in the near future while paying less attention to what will happen in the farther future.  This factor could be used to discount some of the future benefits for shorter term benefits to account to for uncertainty in the validity of the model assumptions in the far future.

\begin{rem}
Note that the optimization problem with the cost functional (\ref{eq:problem}) is well-posed if $C(t,X(t),U(t))$ is nonnegative for all admissible values of $X(t)$ and $U(t)$. Since the system (\ref{eq:sys}) is nonnegative, an instantaneous cost function defined as a linear combination of the state variables (as it is typical in epidemiological applications) would satisfy this requirement.
\end{rem}

The restrictions imposed on the system are twofold:

\paragraph{Restriction on the admissible control policies} The first class of restrictions is due to the structural limitations of the decision unit. Since the intervention is performed by a medical organization the control profile must be sufficiently regular. We assume that the set of admissible controls consists of piecewise constant functions with a fixed interval between two consecutive switches. This restriction can be easily relaxed in different ways: we can assume that the switches occur at some non-regular times, that the control is not piece-wise constant, but rather piece-wise linear (or even piece-wise continuous) and so forth.

Let $\mathcal{T}=\{t_i\}_{i=0}^{n_{int}}$, $0=t_0<t_1<\ldots<t_{n_{int}}=t_f$ be the time instants at which control switches occur along with the initial and final time. We assume that for any $1\le i\le n_{int}$, the duration of the respective interval is constant: $t_i-t_{i-1}=\delta t$. In practice, $\delta t$ is chosen to be a multiple of 1 year. The set of admissible controls is thus:
\begin{equation}\label{eq:U-adm}\mathcal{U}:[0,t_f)\rightarrow \R_{\ge 0}^m\,\mbox{ s.t. }\, \mathcal{U}(t)=U^i\in \R^m_{\ge 0},\; t\in[t_{i-1},t_{i}),\; 1\le i\le n_{int}.\end{equation} 

The controls are assumed to be right-continuous at $t_i$. In this way, a continuous control is described by a set of its discrete values $\{U^i\}_{i=1,\ldots,n_{int}}$. The goal of the optimization is to determine these values in order to minimize the cost function (\ref{eq:problem}) while respecting the constraints. 

Note that the controls are bounded by zero from below, but there are no upper bounds. The upper bounds are imposed implicitly as will be described below.

\paragraph{Dynamic budget allocation} The second class of constraints is due to the budgetary limitations. In practice, an intervention incurs large expenses which are compensated by the government only to some extent. We assume that at the beginning of each control interval $[t_{i-1},t_{i})$ the government sets a baseline budget by estimating the expected expenses and allocates the money to be spent for the intervention starting from this baseline. This works as follows:

The total expenditures related to treating people with TaP or PrEP and the enrollment costs for TaP or PrEP are captured by the following cost function:  
$$J^B_i(X,U^i)=\int\limits_{t_{i-1}}^{t_{i}} B^i(X(s),U(s)) ds,$$
where $B^i(X(s),U(s))$ are certain positive defined functions. In our case, we formulate the optimal control problem to minimize the incidence of HIV infection. The instantaneous cost is thus defined as the incidence rate of HIV, i.e., 
$$C(X,U)=S_H\phi_H(X)+S_L\phi_L(X).$$ 
Furthermore, when computing the budgetary restrictions we assumed   
$$B^i(X(t),U^i)=K^{(t)}_T [ T_H(t) +T_L(t)] + K^{(t)}_P P(t) + K^{(e)}_T N(t)u^i_T(t) + K^{(e)}_P N(t) u_P^i(t)$$
for $i=1,\ldots,n_{int}$ and $t\in [t_{i-1},t_i)$. Here, $K^{(t)}_T$ and $K^{(t)}_P$ are the monthly cost for treatment with TaP and PrEP, respectively, per patient. The coefficients $K^{(e)}_T$ and $K^{(e)}_P$ represent the costs for approaching and, if necessary, enrolling one patient into TaP and PrEP, respectively.

We compute this cost for the uncontrolled case to determine the baseline expenses, i.e., the expenses that the government would defray if there is no intervention. The assigned budget is allocated atop the baseline budget. Let $\tilde X_i(t)$, $t\in[t_{i-1},t_{i})$ be the uncontrolled ($U^i=\mathbf{0}$) solution of (\ref{eq:sys}) with initial condition $\tilde X_i(t_{i})=X(t_{i})$. The dynamic budget constraint (DBC) is thus formulated as follows:
\begin{equation}\label{eq:B}J^B_i(X,U^i)-J^B_i(\tilde X_i,\mathbf{0})\le B,\quad i=1,\dots,n_{int}, \end{equation}
where $U^i \in \R^m_{\ge 0}$ and 
$$J^B_i(\tilde X_i,\mathbf{0})=\int\limits_{t_{i-1}}^{t_{i}} B^i(X(s),\mathbf{0})ds.$$
The constraints (\ref{eq:B}) set an implicit limit to the set of admissible controls $\mathcal{U}$. In terms of optimal control theory such constraints can be classified as {\em mixed integral inequality path constraints}. There is in general no way to handle such constraints analytically, but they can be treated numerically as will be shown below.

Finally we formulate the resulting optimization problem as follows. Determine the discrete values of control $U^i\in \R^2_{\ge 0}$, $i=1,\ldots,n_{int}$ s.t.
\begin{equation}\label{eq:OC-problem}\begin{cases}J^C(X)=\int\limits_0^{t_f} C(t,X(t),U(t)) dt \rightarrow \min\\
J^B_i(X,U^i)-J^B_i(\tilde X_i,\mathbf{0})\le B, i=1,\ldots,n_{int}\\[2pt]
X(t), \,t\in[0,t_f],\quad \mbox{ satisfies } (\ref{eq:sys}) \mbox{ with } X(0)=X_0 \mbox{ and } u(t)=U^i, t\in[t_{i-1},t_i),\\[2pt]
\tilde X_i(t), \,t\in[t_{i-1},t_i]\, \mbox{ satisfies } (\ref{eq:sys}) \mbox{ with } X_i(0)=X(t_i) \mbox{ and } u(t)=0, t\in[t_{i-1},t_i).
\end{cases}\end{equation}

\section{Numerical solution of the optimal control problem}\label{sec:OC-num}

Consider a dynamic system (\ref{eq:sys-example}). From now on we will follow the established convention and will assume that the state $X(t)$ is a row vector. Both $f(X)$ and $g(X,u)$ are thus row-valued vector functions.

A collocation method interpolates the state and the control functions at a number of time points (called grid points), and requires the solution to satisfy the respective differential equation at the collocation points which may not necessarily coincide with the grid points. Some grid points can be used to ensure additional conditions on the solution, e.g., continuity.


\subsection{Lagrange interpolation}
The optimal state trajectory is a piece-wise smooth function whose first derivative is discontinuous at the points $t_i$. Therefore it is natural to break it into $n_{int}$ intervals coinciding with $[t_{i-1},t_i)$ and interpolate on each interval separately using the basis of Lagrange polynomials, \cite{BerTre:04}
$$L^i_k(t)=\prod\limits_{l=0,\,l\neq k}^{n_{cp}} \frac{t-\tau^i_l}{\tau^i_k-\tau^i_l},$$
where $n_{cp}$ is the number of collocation points $\tau^i=\{\tau^i_k\}_{k=0,\dots,n_{cp}}$ within the $i$th interval\footnote{In the following, the upper index ${}^i$ will refer to the number of the respective interval $[t_{i-1},t_i)$, $i=1,\ldots,n_{int}$. We will drop this superscript when the reference to a specific interval is not relevant. Furthermore, the lower index ${}_j$ will refer to the element of the state vector, i.e., $j=1,\ldots,n$.}, $i=1,\dots,n_{int}$. Since the intervals are of equal length we assume that the number of grid points is the same for each interval. 

Figure \ref{fig:Lag_poly} shows a family of Lagrangian polynomials defined on a non-uniform grid. Notice that for any grid point there is only one polynomial that takes on a non-zero value (which is equal to 1) at this point.

\begin{figure}[ht]
  \centering
    \includegraphics[width=0.8\textwidth]{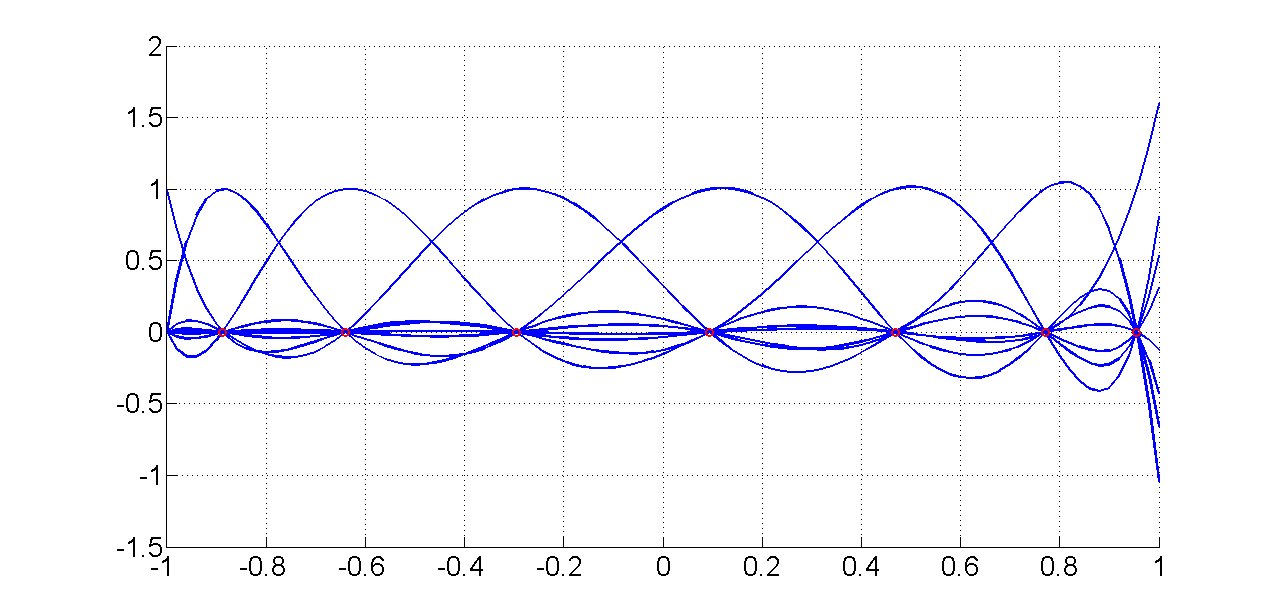}
    \caption{A family of Lagrange polynomials on the interval $[-1,1]$. The grid points are indicated by red circles on the $x$ axis.}
    \label{fig:Lag_poly}
\end{figure}

Let $X(t)$ be the state trajectory for $t\in[t_{i-1},t_i)$. We define $\mathbf{X}^i$ to be the $[(n_{cp}+1)\times n]$ matrix of the values of the state at times $\tau_k^i$, i.e.,  $\mathbf{X}^i_{kj}=[X_j(\tau^i_k)]$, $j=1,\dots,n, k=0,\dots,n_{cp}$. The interpolating polynomial for the $j$th component of the state over the $i$th interval is thus 
$$\hat{X}^i_j(t)=\sum\limits_{k=0}^{n_{cp}} L^i_k(t) X_j(\tau^i_k).$$

It is well known that for a regular (i.e., equispaced) grid the interpolating polynomial may fluctuate heavily between the interpolation points, especially close to the endpoints of the interval (this is referred to as the {\em Runge phenomenon}, see, e.g., \cite{Gau:11}). To overcome this drawback one uses unevenly spaced grid points whose distribution density increases as we approach the endpoints of the interval. There are two standard choices for the grid points: Legendre and Chebyshev points which are zeros of Legendre or Chebyshev polynomials. These polynomials belong to the class of orthogonal polynomials thus giving the name to the method ({\em orthogonal collocations}), \cite{Gau:11,Can:06}. 

\begin{figure}[ht]
  \centering
\hspace*{-2.5em}\includegraphics[width=0.58\textwidth]{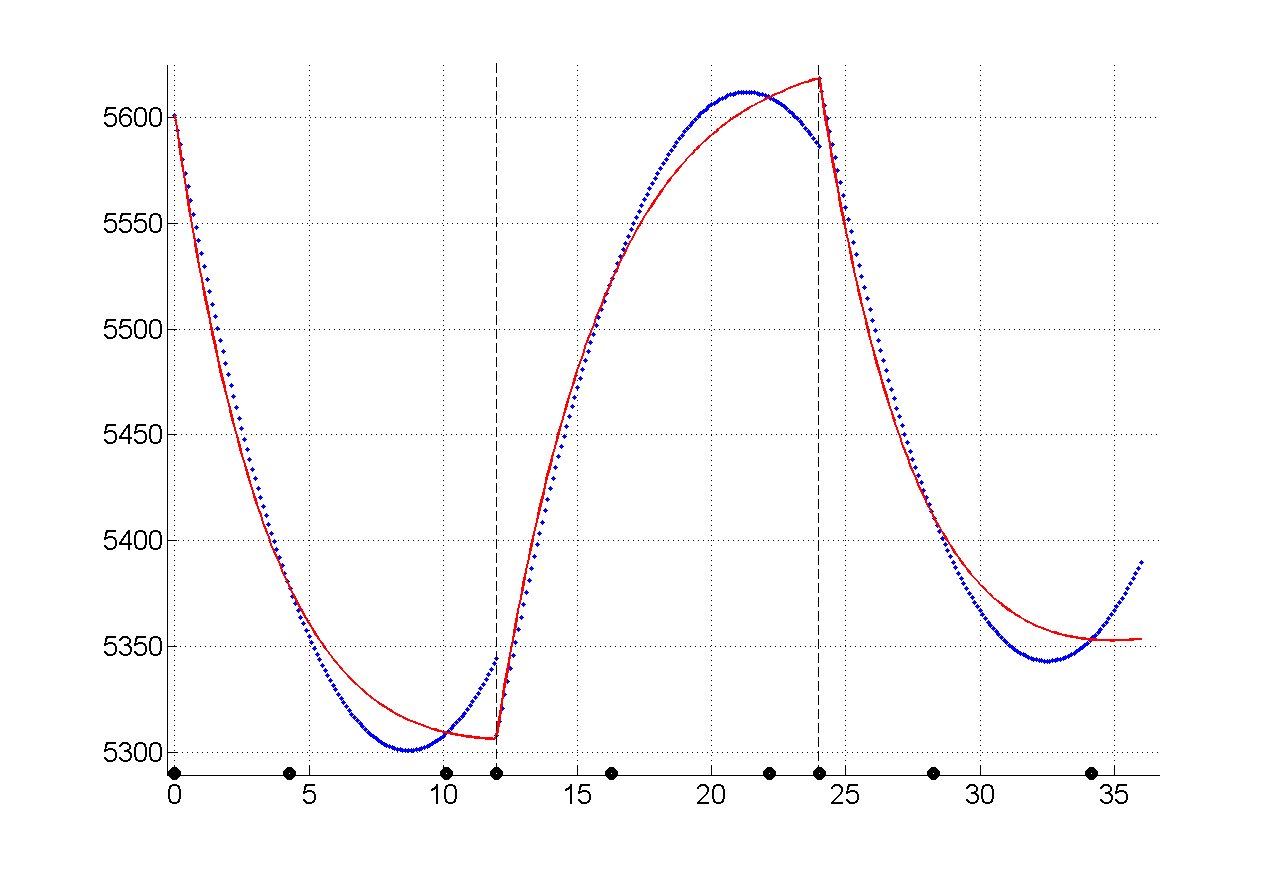}\hspace*{-3em}\includegraphics[width=0.58\textwidth]{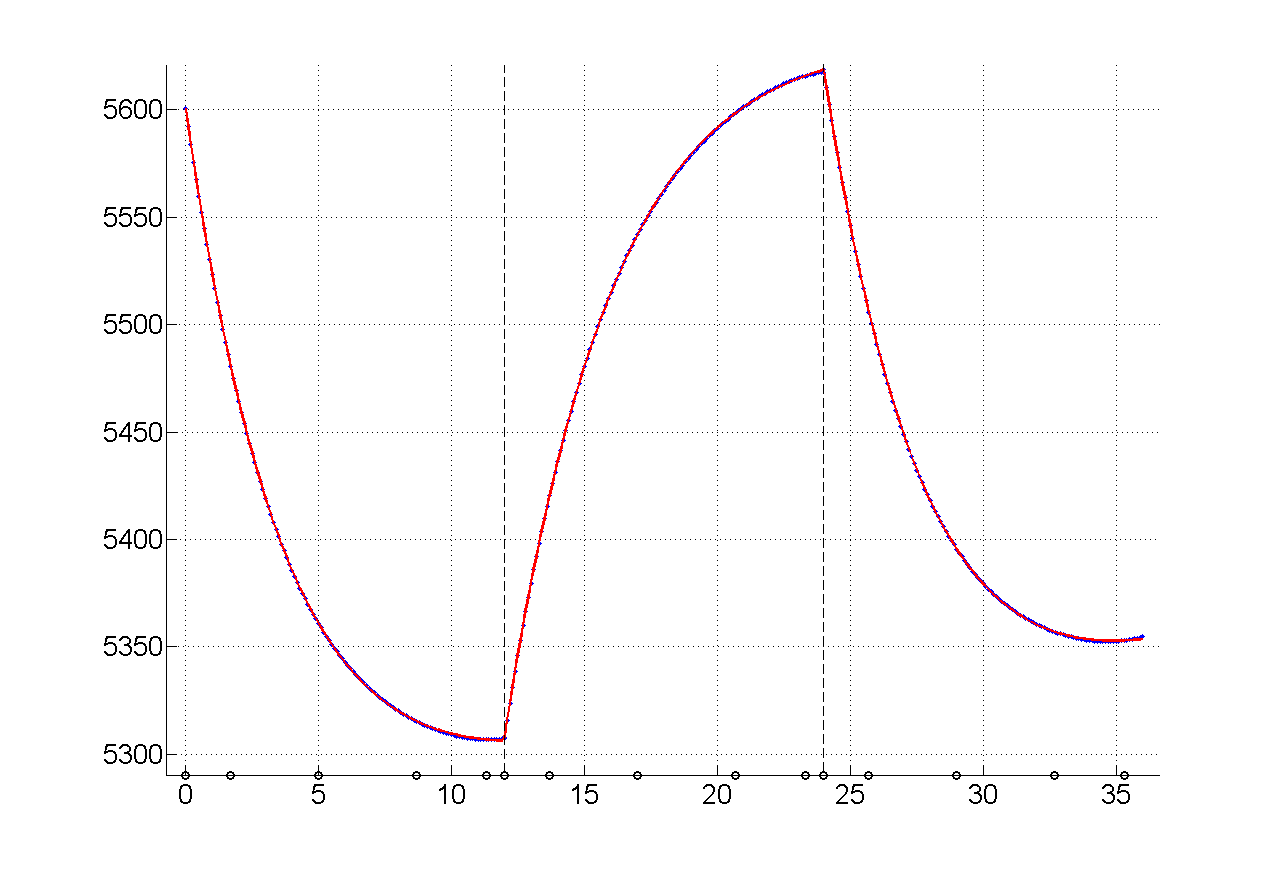}

\vspace*{-1em}a)\hspace*{25em}b)
%
    \caption{Interpolation of a sample trajectory using different number of non-uniformly spaced grid points: a) 3 points, b) 5 points. The sample trajectory is shown in blue, its interpolation is in red. Grid points are indicated by circles on the $x$-axis. Dashed lines separate different intervals.} 
    \label{fig:approx1}
\end{figure}

Figure \ref{fig:approx1} illustrates this thesis. It shows a sample trajectory along with its Lagrange interpolation obtained using 3 and 5 non-uniformly distributed points per interval. It is seen that the interpolation becomes nearly exact already with 5 grid points. In contrast to this, when using regular grid points the interpolating function may deviate from the approximated trajectory by several orders of magnitude.

There has been an extensive discussion regarding the merits and drawbacks of either of two choices (see, e.g., \cite{Tre:08} and references therein). However, it seems that neither of the two is clearly superior to another one. In this study, we have chosen to use the Legendre points.

\subsection{Collocation at Legendre-Gauss-Radau points}

In the previous step the state was described using Lagrange interpolation polynomials which go through $(n_{cp}+1)$ values of the state $X_j(t)$ at $t=\tau^i_k$. Determining the values $X_j(t)$ corresponds to determining an approximation of the solution of (\ref{eq:sys-example}). 
 To do so the derivatives of the interpolating polynomials are computed at points $\tau^i_k$, $k=1,\ldots,n_{cp}$, i.e., $\dot{\hat{X}}_j^i(\tau^i_k)$ and the computed derivatives are required to {\em collocate} with the right-hand sides of (\ref{eq:sys-example}) computed for $X_j(\tau^i_k)$. Additionally, the continuity of the trajectory can be ensured by requiring the interpolating polynomials to be attached to each other at the knot points $\tau^i_0=t_i$, $i=1,\ldots,(n_{int}-1)$: $\hat{X}^{i-1}(t_i)=\hat{X}^i(t_i)$.

To get the required distribution of points we use a particular class of Legendre points, referred to as the Legendre-Gauss-Radau (LGR) points, which are the roots of $P_{n_{cp}+1}(\theta) + P_{n_{cp}}(\theta)$, where $P_{k}(\theta)$ is the $k$-th degree normalized Legendre polynomial, $P_k(1)=1$. The LGR points have the property that one of these points coincides with the left endpoint of the interval, i.e., the respective polynomial has a root at $\theta=-1$\footnote{One can also define LGR points which include the right endpoint instead the left one.}. That is, the LGR points are defined on the interval $[-1,1)$.

\begin{rem}Note that alternatively one can use the Legendre-Gauss (LG) or the Legendre-Gauss-Lobato (LGL) points which are determined as the roots of $P_{n_{cp}}(\theta)$ or as the roots of $\dot{P}_{n_{cp}-1}(\theta)$ together with $\{-1\}\cup\{1\}$. The LG points lie completely within the interval while the LGL points include both endpoints. That is, the LG are defined on $(-1,1)$ and the LGL points on $[-1,1]$.
\end{rem}

Any $t\in [t_{i-1}, t_{i})$ can be associated with $\theta\in [-1,1)$ by the affine transformation 
\begin{equation}\label{eq:t-transform}t=t_{i-1}+\frac{\delta t}{2}(\theta+1).\end{equation}

Consider the $(n_{cp}+1)$ LGR points $\theta_k$, $k=0,\dots,n_{cp}$ with $\theta_0=-1$. One can map the LGR points $\theta_k$ to the respective grid points $\tau^i_k$ using (\ref{eq:t-transform}), i.e., $\tau_k^i=t_{i-1}+\frac{\delta t}{2}(\theta_k+1)$. Following the procedure described above we use Lagrange polynomials to approximate the state trajectory at points $\tau^i_k$. For the $j$th component of the state this results in a polynomial of degree at most equal to $n_{cp}$:
\begin{equation}\label{eq:XN}X_j(t)\approx\hat{X}^i_j(t)=\sum\limits_{k=0}^{n_{cp}} X_j(\tau^i_k) L^i_k(t),\quad t\in[t_{i-1},t_i).\end{equation}
Differentiating (\ref{eq:XN}) and evaluating at the collocation point $\tau^i_k$ we get
\begin{equation}\label{eq:dXN}
\dot{\hat{X}}^i_j(\tau^i_k)=\sum\limits_{l=0}^{n_{cp}} X_j(\tau^i_l) \dot{L}^i_l(\tau^i_k)=\sum\limits_{l=0}^{n_{cp}} X_j(\tau^i_l) D^i_{kl}  ,\end{equation}
where $D^i$ is an $[n_{cp}\times (n_{cp}+1)]$ {\em differentiation matrix} whose $(k,l)$-th element is the derivative of the Lagrange polynomial $L^i_l$ at the collocation point $\tau^i_k$, $k=1,\dots,n_{cp}$. Note that we do not collocate at the knot points $\tau_0^i=t^i$. The differentiation matrix $D$ extends the idea of finite difference approximations of derivatives which are computed based on the trajectory evaluation at a finite number of points. In our case, the approximation of the derivative is a function of the state at all the $n_{cp}+1$ grid points. 

\begin{rem}
In \cite{Garg:10}, it was shown that the differential matrix $D$ turns out to be singular when the number of collocation points is equal to the number of grid points (here denoted by $n_{gp}$). The reason for this is that while the interpolating polynomial is degree $n_{gp}-1$ its derivative is degree $n_{cp}-2$ and so, requires only $n_{gp}-1$ conditions to be uniquely determined. Thus the number of collocation points has to be one less than the number of grid points.

To overcome this difficulty, it is proposed in \cite{Garg:10} to use the set of LGR or LG points along with a boundary point $\{-1\}$ or $\{1\}$. The interpolating polynomial is computed for the extended set of points while the collocation is carried out only at the LGR, resp. LG points. 

The flaw of this approach is that it results in a set of grid points which is not longer produced by an orthogonal polynomial. This leads to a decrease in the accuracy of the resulting polynomial interpolation. The obvious remedy is to compute the interpolating polynomial for the whole set of orthogonal points while collocating at all but one point. The remaining point can be used to enforce the continuity condition as it is done in this paper.
\end{rem}

The Lagrange basis polynomials defined for the LGR points can be written in barycentric form \cite{BerTre:04} as
$$L_j(t)=\frac{P_{n_{cp}+1}(t)+ P_{n_{cp}}(t)}{(t-\tau_j)\big(\dot{P}_{n_{cp}+1}(\tau_j) + \dot{P}_{n_{cp}}(\tau_j)\big)},$$
whence (see \cite{Shen:11} for the derivation)
$$\dot{L}_j(\tau_k)=\begin{cases}
\frac{\displaystyle\dot{P}_{n_{cp}+1}(\tau_k)+ \dot{P}_{n_{cp}}(\tau_k)}{\displaystyle(\tau_k-\tau_j)\big(\dot{P}_{n_{cp}+1}(\tau_j) + \dot{P}_{n_{cp}}(\tau_j)\big)},& j\neq k,\\[7pt]
\frac{\displaystyle\tau_k}{\displaystyle 1-\tau_k^2}\frac{\displaystyle(n_{cp}+1)P_{n_{cp}(\tau_k)}}{\displaystyle(1-\tau_k^2)\big(P_{n_{cp}+1}(\tau_k)+ P_{n_{cp}}(\tau_k)\big)},&j=k\neq 0,\\[7pt]
-\frac{\displaystyle n_{cp}(n_{cp}+2)}{\displaystyle 4},&j=k=0.\end{cases}$$ 

\begin{rem}
Note that the Lagrange basis polynomials are invariant with respect to the shift or the dilatation of the abscissa axis. Thus the approximate differentiation matrix does not change for different intervals  provided the number and the type of grid points do not change. However, the use of the affine transformation (\ref{eq:t-transform}) implies that the system's differential equations should be modified accordingly to take into account the transformed time variable. In practice, this means that the right-hand sides of (\ref{eq:sys-example}) are to be multiplied with the correction factor $\delta t/2$.
\end{rem}

Denoting by $\dot{\hat{\mathbf{X}}}^i_{kj}$ the approximate values of the derivatives of $X_j(t)$ at collocation points $\tau_k^i$ we write compactly $\dot{\hat{\mathbf{X}}}^i= D \mathbf{X}^i$. We also compute the derivatives of the state trajectory by evaluating the right-hand side of the ODE (\ref{eq:sys-example}). We write $F(\hat{\mathbf{X}}^i,U^i)_{kj}=f_{j}(X(\tau^i_k))+g_{j}(X(\tau^i_k),U^i)$ and the resulting set of $n_{cp}\cdot n$ constraints is hence 
\begin{equation}\label{eq:Constr-F}
D\mathbf{X}^i-\frac{\delta t}{2}F(\mathbf{X}^i,U^i)=0.
\end{equation}

\subsection{Defect constraints}

While (\ref{eq:Constr-F}) determine values of $n\cdot n_{cp}\cdot n_{int}$ samples of the state trajectory, there are still $n\cdot n_{int}$ free $X$'s which can be used to ensure continuity of the trajectory. These are the values of $X$ at the knot points $t=t_i$, $i=0,\dots,n_{int}-1$. The initial value $X_0$ determines the first $n$ points: $X(t_0)=X_0$. The remaining values should be determined from the defect constraints which can be written as follows:
$$X^{i}(t_i)-X^{i+1}(t_i)=0\quad \forall i=1,\dots,n_{int}-1.$$
The second term in the left-hand side is $X^{i+1}(t_i)=X(\tau^i_0)$ and the first term is 
\begin{equation}\label{eq:X-int}X^{i}(t_i)=X^{i}(t_{i-1})+\int_{t_{i-1}}^{t_i} f(X(s))+g(X(s),U^i)ds.\end{equation}
The integral in (\ref{eq:X-int}) can be computed numerically using {\em Gaussian quadrature}:
\begin{equation}\label{eq:G-quad}\int_{t_{i-1}}^{t_i} f_{j}(X(s))+g_{j}(X(s),U^i)ds=\frac{\delta t}{2}\sum\limits_{k=0}^{n_{cp}}w_k F_j(\mathbf{X}(\tau^i_k),U^i),\end{equation}
where $w_j$ are the weights computed for the given distribution of grid points, \cite{Can:06,Gau:11}. 

Now we have $n\cdot (n_{cp}+1)\cdot n_{int}$ constraints to determine the same number of discrete values of the state function. Solving these equations is equivalent to getting a numerical solution of the respective differential equations. In other words, we reduced the procedure of integrating the system of ODEs to solving a system of nonlinear algebraic equations.

Now we have the solution for given values of the control $\{U^i\}$, $i=0,\dots,n_{int}-1$. Before proceeding to the optimization we have to formulate the constraints imposed on the controls.

\subsection{Numerical approximation of dynamic budget constraints}

To compute the dynamic budget constraints one has to solve the system's ODEs with zero controls for $n_{int}$ intervals with initial conditions determined by the solution of the controlled system. This problem can be formulated within the considered framework in the following way. Let $X_0^i$ be state values at the grid points within the $i$th interval. We set $\mathbf{X}^i_0(t_{i-1})=\mathbf{X}(t^i_{i-1})$ and the remaining discrete values $\mathbf{X}^i_0(\tau_k^i)$, $k=1,\ldots,n_{cp}$ are determined from 
\begin{equation}\label{eq:Constr-F0}
D\mathbf{X}_0^i-\frac{\delta t}{2}F(\mathbf{X}_0^i,0)=0.
\end{equation}
The resulting set of inequality constraints is thus
$$\frac{\delta t}{2}\sum\limits_{k=0}^{n_{cp}}w_k \left[B^i(\mathbf{X}^i(\tau^i_k),U^i)-B^i(\mathbf{X}^i_0(\tau^i_k),0)\right]-B_{lim}\le 0,\quad i=1,\ldots,n_{int}.$$

\subsection{Constrained nonlinear programming problem}

The cost function (\ref{eq:problem}) is computed numerically using Gaussian quadrature and hence the optimal control problem (\ref{eq:OC-problem}) turns into the following constrained optimization problem:
\begin{equation*}\begin{cases}
\dfrac{\delta t}{2}\sum\limits_{i=1}^{n_{int}}\sum\limits_{k=0}^{n_{cp}}w_k C(\mathbf{X}(\tau^i_k),U^i)\rightarrow \min\\[7pt]
\mbox{s.t. } D\mathbf{X}^i-\dfrac{\delta t}{2}F(\mathbf{X}^i,U^i)=0\\[3pt]
\qquad X(\tau^i_0) - \dfrac{\delta t}{2}\sum\limits_{k=0}^{n_{cp}}w_k F_j(\mathbf{X}(\tau^i_k),U^i)=0,\; i=1,\ldots,n_{int},\\[3pt]
\qquad D\mathbf{X}_0^i-\dfrac{\delta t}{2}F(\mathbf{X}_0^i,0)=0,\hspace*{6em} i=1,\ldots,n_{int},\\[3pt]
\qquad \mathbf{X}^i_0(t_{i-1})=\mathbf{X}(t^i_{i-1}),\hspace*{7.5em} i=1,\ldots,n_{int},\\[3pt]
\qquad\dfrac{\delta t}{2}\sum\limits_{k=0}^{n_{cp}}w_k \left[B^i(\mathbf{X}^i(\tau^i_k),U^i)-B^i(\mathbf{X}^i_0(\tau^i_k),0)\right]-B_{lim}\le 0,\\
\hspace*{18em}\quad i=1,\ldots,n_{int}.
\end{cases}\end{equation*}

The above nonlinear optimization problem was implemented in Matlab with the use of {\tt fmincon} function. The optimization was performed using Sequential quadratic programming (SQP) algorithm, see Sec.\ \ref{sec:res_num} for details.

\section{Results}\label{sec:results}

We calculated the optimal allocation strategy for an outbreak scenario in a large US city with 100,000 at-risk MSM individuals. That is, the introduction of HIV into a subpopulation with a low prevalence (e.g. people aged 15-25) is considered. We assume the infection probability to be $\beta_A$ = 0.015 and $\beta_C$ = 0.001, i.e., acutely infecteds are 15 times more contagious than chronically infecteds \cite{bellan_reassessment_2015}. Furthermore, we consider a situation in which risk is static, that is, individuals do not change their risk behavior over time, $\rho_H = \rho_L = 0$. Hereby, 90 \% of the MSM population exhibits a low risk behavior, the remaining 10 \% display high-risk behavior. Hence, we set $\alpha_L=250$ to be nine times $\alpha_H=28$ so that the probability that a newly entering individual is high-risk is 10\%. 

We set $\mu = \frac{1}{360}$, leading to the individuals staying 30 years in the system on average if there was no HIV-related removal. Here removal represents either natural death, becoming sexually inactive for medical or social reasons, or settling in a monogamous relationship of two uninfected individuals. There is little data available on the mixing patterns among high and low-risk individuals that we could use to fix $\pi$. However, we assumed that $\pi=0$ which is a common assumption when mixing dynamics are unknown. Furthermore, we assume that individuals from the high risk group have ten times more sexual contacts than the ones from the low risk group based on analysis of longitudinal sexual behavioral data \cite{romero-severson_dynamic_2015}. Finally, we optimized over values of $\lambda_L$ and $\bar{u}_T$ such that at equilibrium the prevalence was 20\% and the proportion of infected individuals on treatment was 25\%, which is consistent with measured values \cite{_prevalence_2010, rosenberg_modeling_2014}, leading to $\lambda_H$ = 40.9, $\lambda_L$ = 4.09, and $\bar{u}_T$ = 0.00148. We also assume that treatment never fails and is never stopped, i.e., $y = 0$.

We consider four scenarios varying in the value of $x$: 0, $\frac{1}{60}$, $\frac{1}{24}$, $\frac{1}{12}$. That is, in one scenario PrEP never fails and is never canceled and in three scenarios this happens on average after 5, 2, and 1 year, respectively. These scenarios are meant to correspond to different distribution policies. More precisely, we assume that PrEP never fails and that a high risk individual, once identified, is prescribed PrEP either indefinitely or is only provided with it for 5, 2, and 1 year, respectively. The parameters of the cost function are given in table \ref{tab:cost}. The cost of enrollment was estimated by considering the total cost per enrollee of a similar intervention program implemented by the New York City Department of Health \cite{Blank:05} plus the cost of the labs involved in determining infectious status. The cost of TaP and PrEP is taken from \cite{alistar_effectiveness_2014}. 

\begin{table}[htb]
\centering\begin{tabular}{c | c | c | c}
  \hline			
  $K^{(t)}_T$ & $K^{(t)}_P$ & $K^{(e)}_T$ & $K^{(e)}_P$\\
  \hline
  1299 & 776 & 266 & 213\\
  \hline  
\end{tabular}
\caption{Numerical values of the coefficients in the budget functional.\label{tab:cost}}
\end{table}

The trajectories of the controls are shown in Fig.~\ref{fig:cntrl}, the ones of the number of individuals in the various states in Fig.~\ref{fig:res1}-\ref{fig:res3}. We can observe that the lower the value of $x$ is, the lower is the total number of newly infecteds and the higher is both the overall values of $u_P$ relative to $u_T$ and the number of individuals on PrEP. These two dependencies are to be expected as we try to optimize the allocation strategy of the agency in charge of fighting HIV in the considered city and this agency does not cover the long-term cost of prescribing PrEP, but only the enrollment costs. Therefore, a lower $x$ should render the overall intervention more effective and the agency should become more prone to employ PrEP more intensively. 

\begin{figure}[th]
  \centering
    \includegraphics[width=0.8\textwidth]{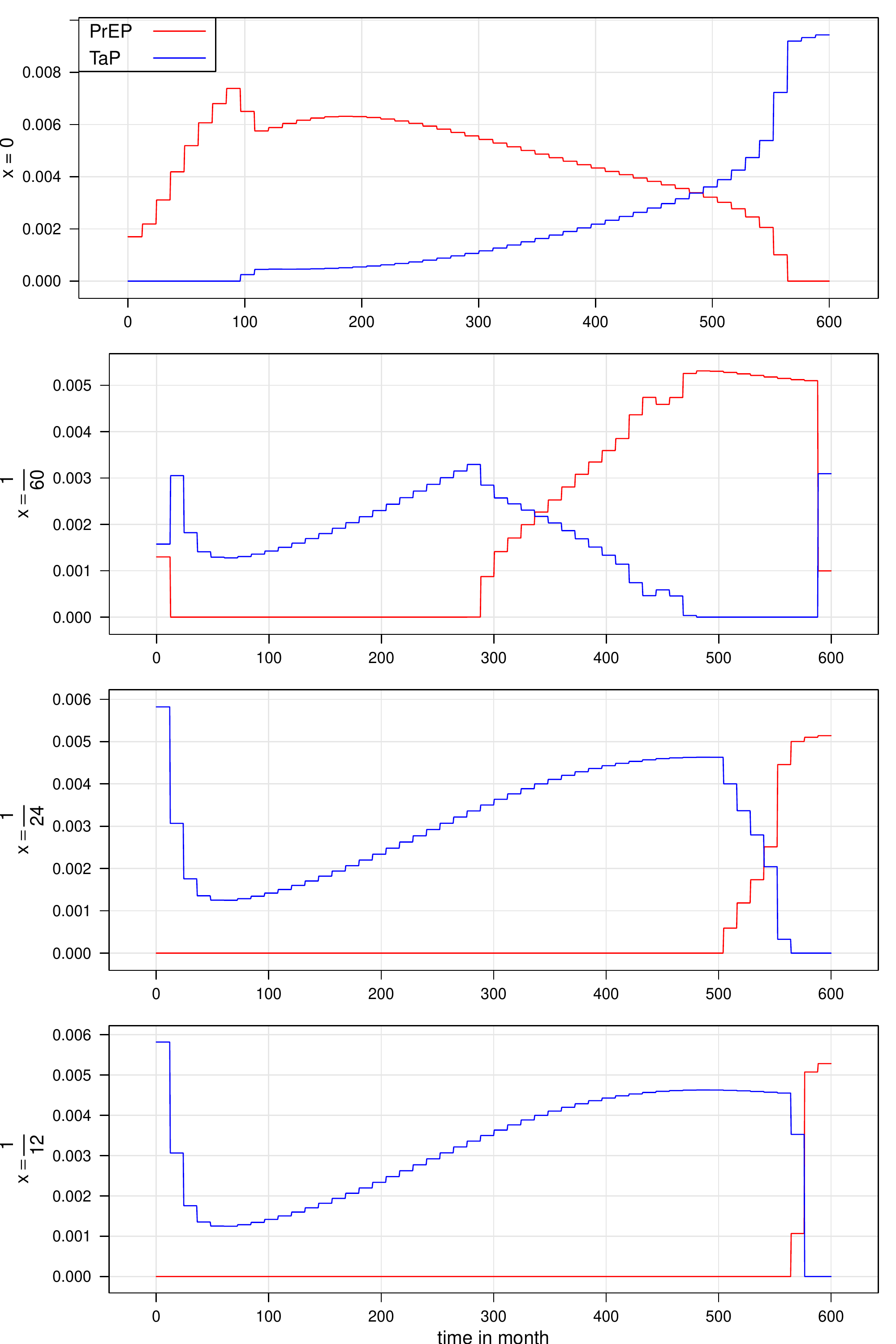}
    \caption{The trajectories of the controls for $x = 0$, $x = \frac{1}{60}$, $x = \frac{1}{24}$, $x = \frac{1}{12}$ over 50 years.}
    \label{fig:cntrl}
\end{figure}

For $x=0$, elimination is nearly achieved, whereas for the other values of $x$ this is not the case. That is, only the value $x=0$ allows the agency to push the epidemic over the tipping point where the intervention leads to the infection cycle to break down. Moreover, the difference in the number of infecteds over time is much more pronounced between the scenario with $x=0$ and the three scenarios with $x > 0$ than between the three scenarios with $x > 0$. Most extreme, the difference in the total number of infecteds is only 0.2 \% between $x=\frac{1}{24}$ and $x=\frac{1}{12}$. This is probably due to buffering effects also observed for other exogenous variables \cite{Powers:14} leading to a low influenceability of the system for these values of $x$.

Since only the consequences unfolding over the 50 year period considered are taken into account by the cost function, the choice of how to allocate the resources to enrollment into TaP and PrEP, respectively, becomes myopic to the end of the considered period. That is, for $x=\frac{1}{24}$ and $x=\frac{1}{12}$ resources are jammed into PrEP towards the end although letting the number of infecteds on treatment drop would ultimately backfire, i.e., leading to number of infecteds higher than necessary after a while. The same effect with the roles of TaP and PrEP switched can be observed for $x = \frac{1}{60}$. Moreover, since the number of infecteds or high-risk susceptibles declines over time when TaP or PrEP is favored, successful enrollment into TaP or PrEP, respectively, becomes more and more expensive, eventually favoring the other treatment. This can be observed for $x = 0$, $x = \frac{1}{60}$, and $x = \frac{1}{24}$. For $x = \frac{1}{12}$, TaP is favored over PrEP the whole period except for the switch at the very end.

It is therefore advantageous to modify the optimization problem to avoid such unrealistic results. However, as this reformulation is not always feasible, a possible remedy would be to to discard a period long enough at the end of the time interval for which the controls were optimized. 

There are two heuristic approaches to accomplish this: (a) Calculating the costs for an only-PrEP and an only-TaP strategy and determining when the last switch between these strategies in terms of which strategy incurs less costs (normally there is only one) takes place. The length of the period discarded is then chosen to be equal to the duration until this switch. (b) Running the optimization procedure on intervals of different length and checking whether the same type of strategy switch taking place at the same distance to the end of the respective calculation period occurs. If yes, this switch is an numerical artifact and needs to be discarded.

\begin{figure}[th]
  \centering
    \includegraphics[width=0.8\textwidth]{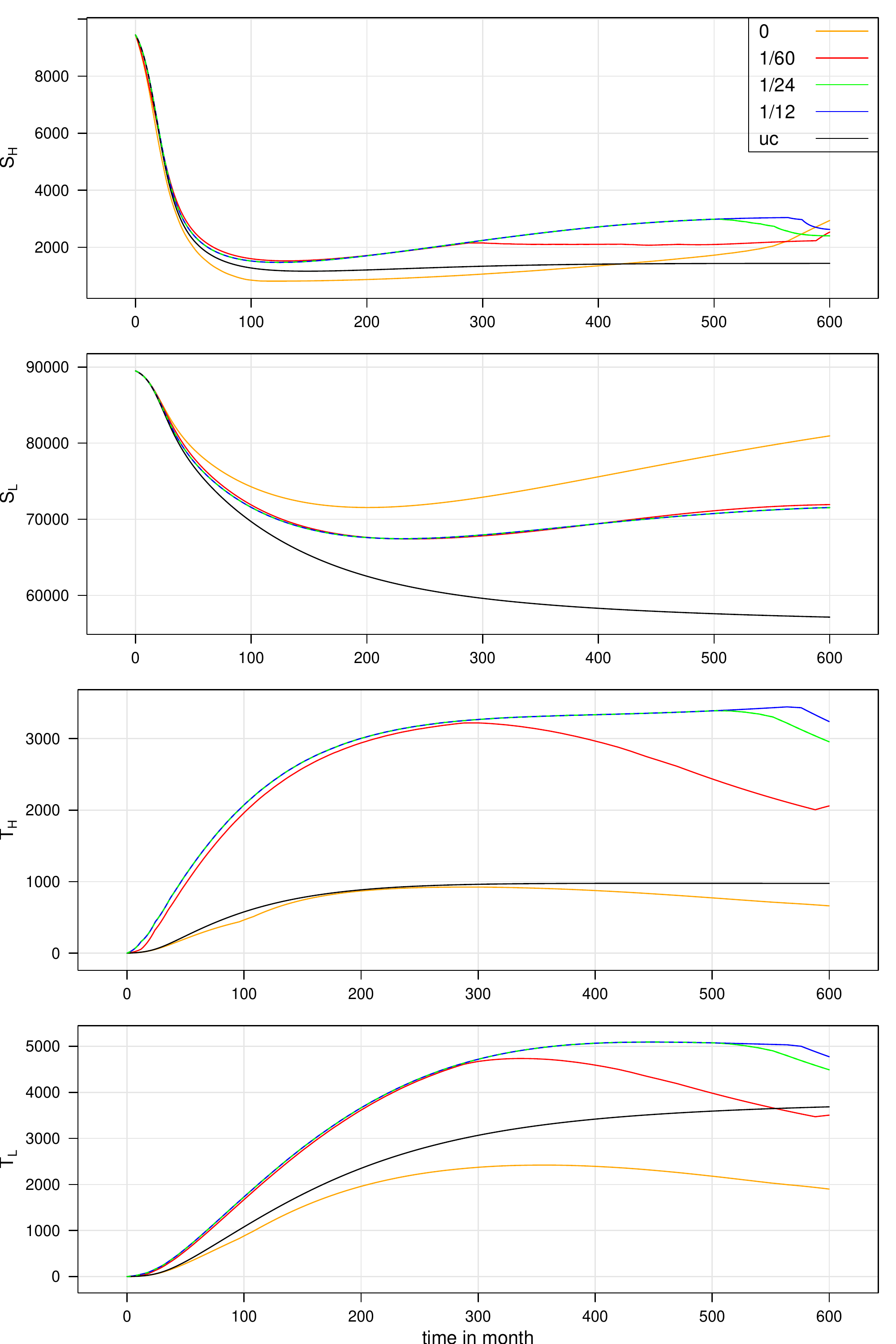}
    \caption{The trajectories of $S_H$, $S_L$, $T_H$, and $T_L$ over 50 years.}
    \label{fig:res1}
\end{figure}

\begin{figure}[th]
  \centering
    \includegraphics[width=0.8\textwidth]{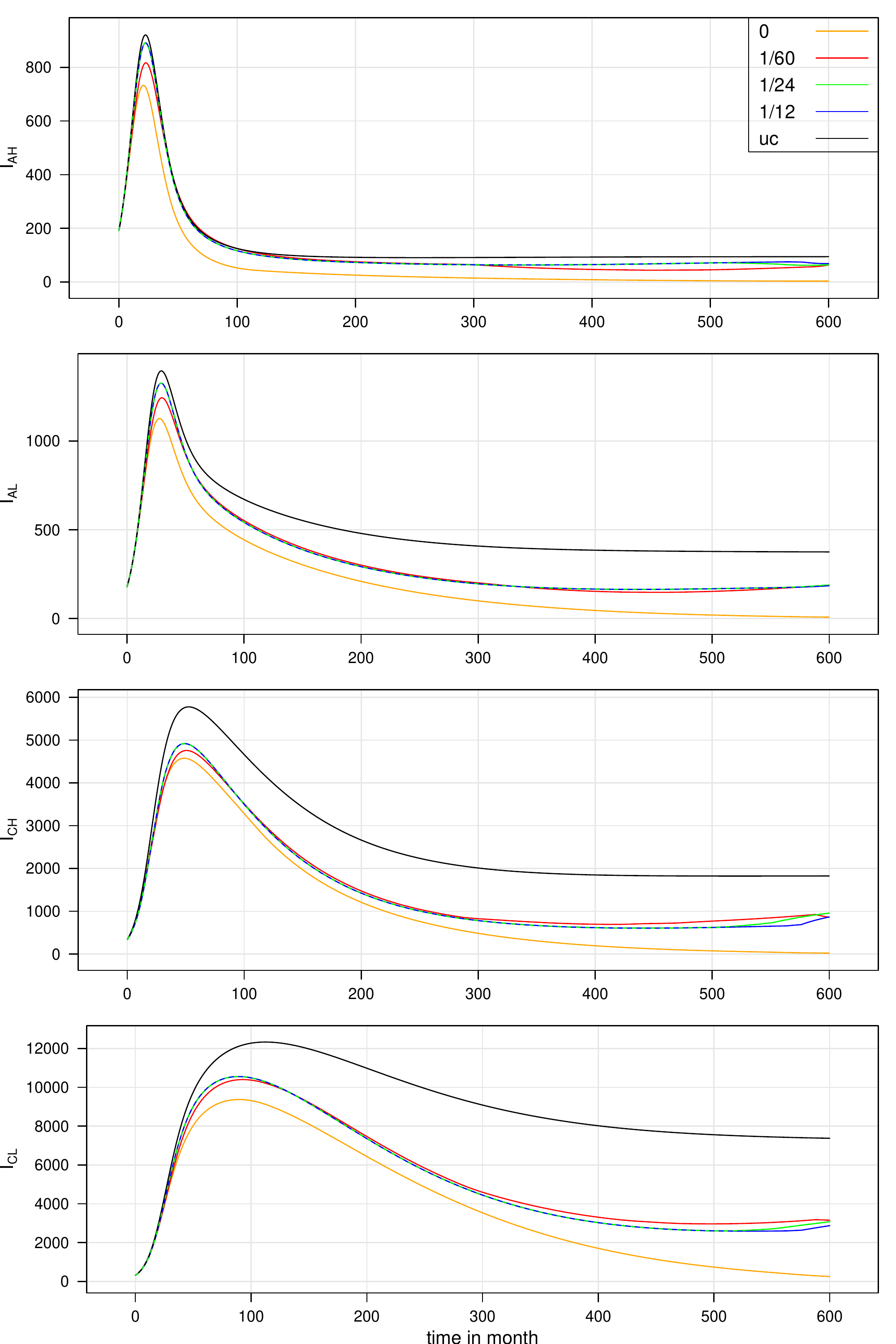}
    \caption{The trajectories of $I_{AH}$, $I_{AL}$, $I_{CH}$, and $I_{CL}$ over 50 years.}
    \label{fig:res2}
\end{figure}

\begin{figure}[th]
  \centering
    \includegraphics[width=0.8\textwidth]{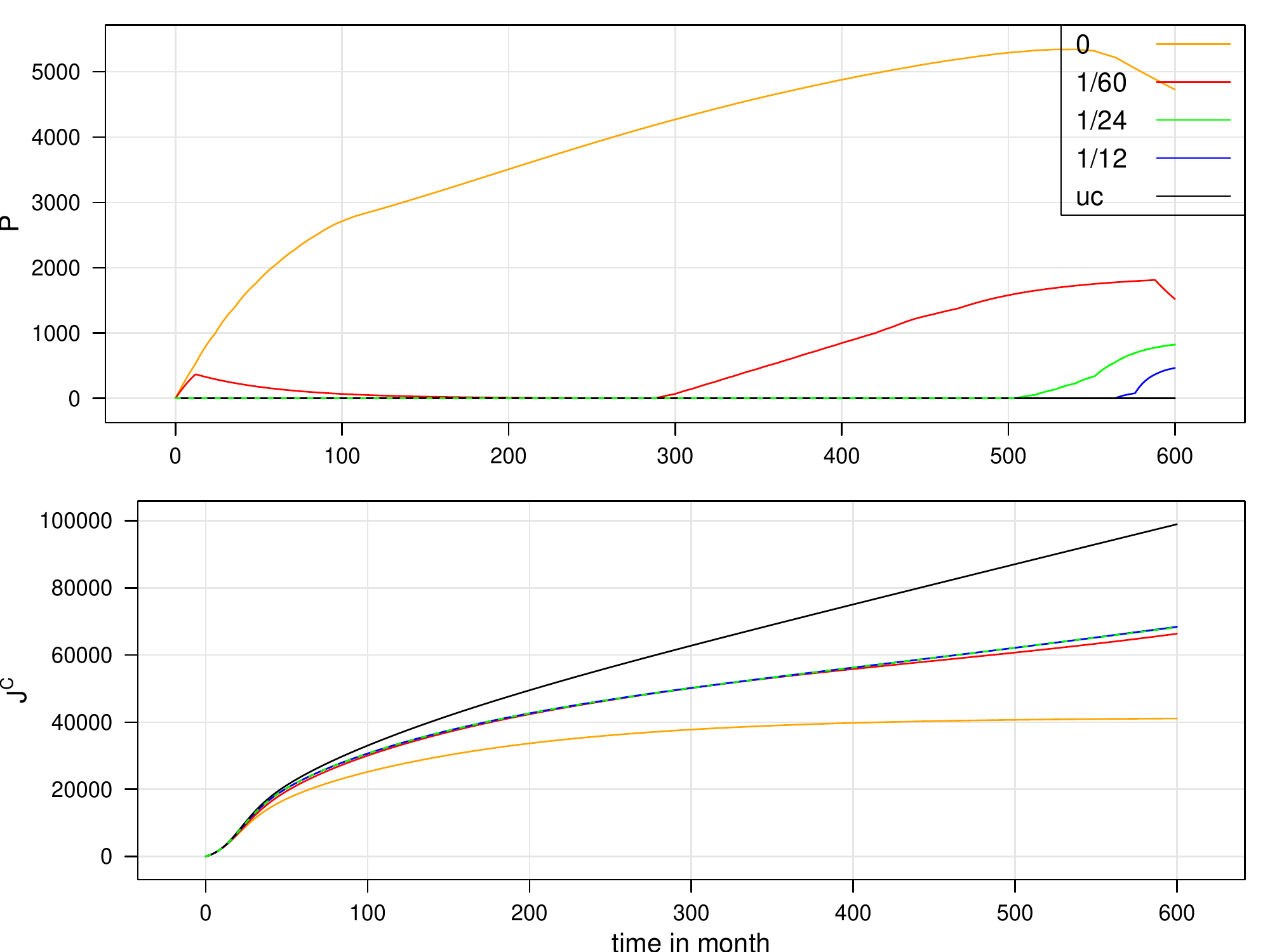}
    \caption{The trajectories of $P$ and $J^C$ over 50 years.}
    \label{fig:res3}
\end{figure}

\subsection{Numerical implementation}\label{sec:res_num}

The described optimization problem was solved both using the multiple shooting (not described in the paper) and the orthogonal collocation methods. In general, the orthogonal collocations approach is about 3-5 times faster than the multiple shooting one\footnote{For the optimization time interval equal to 50 years the computational time for orthogonal collocations was  typically within the range of $5e{+}3 - 10e{+}3$ seconds while for multiple shooting this time was about $2e{+}4 - 5e{+}4$ seconds. Note that the computational time is strongly influenced by the length of the optimization interval and by the choice of the initial guess.}. However, the latter typically provides a better result albeit the improvement never exceeds a fraction of percent. Furthermore, with orthogonal collocations one may sometimes experience the situation when the program runs out of memory. This is overcome by a slight change of the initial guess.

Practice shows that it is in general advantageous to choose an initial guess that provides a low value of the cost function while violating constraints. SQP algorithm recovers from the constraints violation in a couple of steps while keeping the cost function relatively small. If on the contrary, one chooses the initial guess in order to satisfy the constraints, the deviation from the optimal value of the cost function may turn out to be rather large. As the convergence of the algorithm is rather slow, it may take unnecessary many steps to achieve the optimal value.

The slow convergence of the algorithm is explained by the particular non-local structure of the budget constraints. Since these constraints are computed along the to-be-optimized trajectory, any change of the control variables leads to the variation of the trajectory and thus to the re-computation of the constraints. As an example, a little change of the control during the first interval influences the budget constraints all over the whole optimization interval.


To speed up the computation the gradient of the cost function was supplied to the optimization algorithm. The Hessians were computed numerically using centered finite differences due to the complexity of the respective analytical expressions.

\section{Conclusions}
This paper presents a novel model describing an HIV propagation dynamics for a geographically concentrated MSM population, along with two control actions. The two controls represent allocation of funding to the two major drug-based interventions available for HIV, HAART / TaP and PrEP, by an health agency. Addressing this setting is of particular interest as it is heavily debated to which extent PrEP should enter the intervention portfolio of health care systems \cite{Punyacharoensin:16, Baeten:12, Mugwanya:13, van_de_Vijver:13}. A suitably modified orthogonal collocations method is applied to compute optimal control profiles for a realistic outbreak scenario in a large US city. Hereby, the influence of the effective cost of PrEP on the optimal allocation strategy and the dynamics of the epidemic is studied, by varying a parameter governing said cost. The obtained results show that the allocation pattern heavily depends on the cost of PrEP, rendering PrEP the dominant intervention or not employed at all depending on said cost. Moreover, whether elimination of HIV in the considered population is achievable also is dependent on the effective cost of PrEP. 

Currently, we work on improving the developed algorithm in order to introduce it into the epidemiological community. A manuscript aimed at a public health health audience illustrating use of these methods is currently prepared.

\section*{References}


\end{document}